\def\BibTeX{{\rm B\kern-.05em{\sc i\kern-.025em b}\kern-.08em
    T\kern-.1667em\lower.7ex\hbox{E}\kern-.125emX}}
\newtheorem{theorem}{Theorem}
\newtheorem{lemma}{Lemma}
\newtheorem{assumption}{Assumption}
\newtheorem{definition}{Definition}
\newtheorem{problem}{Problem}
\newtheorem{remark}{Remark}
\begin{document}

\title{Distributed Resilient Asymmetric Bipartite Consensus: A Data-Driven Event-Triggered Mechanism
}

\author{\IEEEauthorblockN{1\textsuperscript{st} Yi Zhang}
\IEEEauthorblockA{\textit{The Department of Electrical and Computer Engineering} \\
\textit{University of Connecticut}\\
Storrs Mansfield, USA \\
yi.2.zhang@uconn.edu}
\and
\IEEEauthorblockN{2\textsuperscript{nd} Mohamadamin Rajabinezhad}
\IEEEauthorblockA{\textit{The Department of Electrical and Computer Engineering} \\
\textit{University of Connecticut}\\
Storrs Mansfield, USA \\
Mohamadamin.rajabinezhad@uconn.edu}
\and
\IEEEauthorblockN{3\textsuperscript{rd} Shan Zuo}
\IEEEauthorblockA{\textit{The Department of Electrical and Computer Engineering} \\
\textit{University of Connecticut}\\
Storrs Mansfield, USA \\
shan.zuo@uconn.edu
}

}

\maketitle

\begin{abstract}
The problem of asymmetric bipartite consensus control is investigated within the context of nonlinear, discrete-time, networked multi-agent systems (MAS) subject to aperiodic denial-of-service (DoS) attacks. To address the challenges posed by these aperiodic DoS attacks, a data-driven event-triggered (DDET) mechanism has been developed. This mechanism is specifically designed to synchronize the states of the follower agents with the leader’s state, even in the face of aperiodic communication disruptions and data losses. Given the constraints of unavailable agents' states and data packet loss during these attacks, the DDET control framework resiliently achieves leader-follower consensus. The effectiveness of the proposed framework is validated through two numerical examples, which showcase its ability to adeptly handle the complexities arising from aperiodic DoS attacks in nonlinear MAS settings.
\end{abstract}

\begin{IEEEkeywords}
Nonlinear Multi-agent Systems, Bipartite Consensus, Data-Driven, Event-Triggered Mechanism, Aperiodic DoS Attack.
\end{IEEEkeywords}

\section{INTRODUCTION}
\label{sec:introduction}

The topic of consensus in multi-agent systems (MAS) has garnered significant attention due to its focus on achieving a common agreement among a group of agents through neighborhood negotiations. It is well-recognized that the advantageous outcomes of achieving consensus in MAS have found extensive applications in the field of control engineering. These applications include, but are not limited to, flocking, swarming, formation control, and synchronization \cite{olfati2004consensus}.

In practical scenarios, the dynamics within MAS often display a mix of cooperative and competitive interactions, similar to the dual nature of trust and distrust in social networks or competition in economic markets. These dynamics significantly affect agent collaboration. Altafini introduced the concept of bipartite consensus for integrator MAS, using graph theory to model complex relationships among agents with both positive and negative interactions represented within a signed graph \cite{altafini2012consensus}. This model allows agents to either oppose each other with equal force or converge towards a neutral position, especially in the absence of structural balance.

Real-world scenarios often require an asymmetrical approach to bipartite consensus. For example, multilateral teleoperation systems adjust force feedback based on the differing masses of equipment \cite{chen2019rbfnn}. Similarly, rehabilitation robots for lower limbs provide customized assistance or resistance according to the muscle strength of each patient’s leg \cite{hussain2013adaptive}. These applications highlight the need for strategies tailored to the specific requirements of asymmetric consensus, which is a central theme of this research.

Developing accurate mathematical models for nonlinear MAS is often a time-consuming and challenging task, further complicated by unmodeled dynamics and a lack of robustness. These issues can lead to overly complex control designs that are impractical, especially when added network complexities further complicate matters \cite{zhang2019networked,wang2022neural}. As a solution, there is a growing trend in the field towards adopting data-driven control approaches that eliminate the need for detailed system modeling.

Furthermore, MAS are increasingly exposed to cybersecurity risks \cite{zuo2022adaptive,sun2023resilience,zuo2022resilient,zuo2023resilient,zhang2024resilient}, particularly through cyber-attacks such as aperiodic DoS attacks, due to their reliance on communication networks. These attacks disrupt communication, causing data loss, miscoordination, and severe impacts on system performance, potentially leading to complete failure \cite{cetinkaya2019overview}. The decentralized and interconnected nature of MAS complicates the application of traditional cybersecurity measures, necessitating tailored security solutions for each agent in dynamically changing networks. Aperiodic DoS attacks, by interrupting data flow, pose significant threats to the integrity and functionality of MAS, challenging the cooperative control strategies essential for system consensus and synchronization. This underscores the urgent need for robust control methods designed to maintain reliable communication and achieve system goals despite these disruptions.

Meanwhile, MAS frequently comprise multiple control loops that share computational and communication resources. Efficient allocation of these resources is a crucial aspect of MAS design. Traditional digital control methods typically operate under a time-triggered paradigm, where controllers execute periodically regardless of the system's state. This approach can lead to excessive workloads, especially when these resources could be better utilized for other tasks. Due to these limitations, there has been a renewed interest in event-triggered control \cite{ge2021dynamic}. In event-triggered control systems for MAS, the inputs are updated not at regular intervals, but in response to specific events. Recently, considerable effort has been focused on developing systematic techniques for designing event-triggering mechanisms that effectively implement stabilizing feedback controllers.

The main contributions of this paper are: 1) We enhance existing asymmetric bipartite consensus (ABC) formulations by introducing asymmetric coefficients for both follower groups, creating a more practical unified resilient ABC problem, with an associated neighborhood ABC error. 2) We develop a novel, distributed resilient data-driven event-triggered (DDET) adaptive control algorithm using a data-driven, time-varying parameter. This algorithm, distinct from current methods \cite{hu2021event,cong2021sampled}, is tailored to mitigate the impact of aperiodic DoS attacks in collaborative ABC control and operates without the need for traditional mathematical modeling of system dynamics.

\noindent\textbf{Notations:} In this paper, $\left|X^{n}\right|$ and $\left|X^{m \times n}\right|$ denote the Euclidean and 2-norms for vectors and matrices, respectively. Vectors $\mathbf{1}_N \in \mathbb{R}^N$ and $\mathbf{0}$ consist of ones and zeros, respectively. The identity matrix is represented by $I$. Elements of a matrix $A \in \mathbb{R}^{n \times n}$ are denoted by $\left[A\right]_{ij}$.

\section{Preliminaries}
\label{sec:pre}

We consider a MAS consisting of one leader and $N$ followers, where the interactions among them are represented by a signed digraph $\mathscr{G}=(\mathcal{V}, \mathscr{E}, \mathcal{A})$, where $\mathcal{V}=\{0,1,2, \ldots, N\}$ is the set of vertices, $0$ denotes the leader, and $1\cdots N $ denote the followers. $\mathscr{E} \subset \mathcal{V} \times \mathcal{V}$ denotes the set of edges, and $\mathcal{A}=[a_{ij}]\in\mathbb{R}^{N\times N}$ is the associated adjacency matrix where $a_{ij} \neq 0$ if $(j, i) \in \mathscr{E}$. The neighborhood of the agent $i$ is $\mathcal{N}_i=\{j\in\mathcal{V}:(j,i)\in\mathscr{E}\}$ and the self-edge $(i,i)$ satisfies $(i,i)\notin\mathscr{E}$. The in-degree matrix is defined as $\mathcal{D}=\operatorname{diag}\left(d_i\right)$ with $d_i=\sum\nolimits_{j \neq i}a_{ij}$. The Laplacian matrix ${\mathcal{L}}$ is defined as
$\mathcal{L} = \mathcal{D} - \mathcal{A}$. We then introduce the following definitions for a structurally balanced signed graph $\mathscr{G}$ and its associated graph matrix $L_{|\mathscr{G}|}$.

\begin{definition}
\label{def:1}
The signed graph $\mathscr{G}$ is structurally balanced if the vertex set $\mathcal{V}$ can be partitioned into $\mathcal{V}_{1}$ and $ \mathcal{V}_{2}$, with $\mathcal{V}_{1} \cup \mathcal{V}_{2}=\mathcal{V}$ and $\mathcal{V}_{1} \cap \mathcal{V}_{2}=\emptyset$, such that $a_{ij} \geqslant 0, \forall i,j \in \mathcal{V}_{\iota}$ and $a_{ij} \leqslant 0, \forall i \in \mathcal{V}_{\iota}, j \in \mathcal{V}_{3-\iota}$, where $\iota=\{1,2\}$.
\end{definition}
\begin{definition}
\label{def:2}
    For a structurally balanced signed digraph $\mathscr{G}$, we define $W=\mathrm{diag}\left(\delta_i\right)$ and $S=\mathrm{diag}\left(s_i\right)$, where
\begin{equation}
\label{eq1}
\delta_i=\left\{\begin{aligned}
& 1, &i \in \mathcal{V}_{1}
\\&-1,&i \in \mathcal{V}_{2}
\end{aligned}\right. \text{ and } s_i=\left\{\begin{aligned}
& m, &i \in \mathcal{V}_{1}
\\&n,&i \in \mathcal{V}_{2}
\end{aligned}\right.\end{equation}

Then an associated graph matrix $L_{|\mathscr{G}|}$ is defined as
\begin{equation}
\label{eq2}
L_{|\mathscr{G}|}=\mathrm{diag}\left(S^{-1}W\mathcal{A}WS\mathbf{1}_N\right)-\mathcal{A} 
\end{equation}
\end{definition}

\section{System Modeling and Problem Formulation}
\label{sec:prb}

Consider a discrete-time nonlinear system for $N$ followers, each governed by
\begin{equation}
\label{eq3}
    y_i(k+1) = f_i\left(y_i(k), u_i(k)\right), \quad i \in \mathcal{V}
\end{equation}
where $y_i(k) \in \mathbb{R}$ and $u_i(k) \in \mathbb{R}$ denote the output and input of follower $i$ at time step $k$. Here, $f_i(\cdot)$ is an unknown smooth nonlinear function, and $\mathcal{V}$ represents the set of all followers.

\begin{assumption}\label{ass:1}
Partial derivative $\frac{\partial f_i(\cdot, \cdot)}{\partial u_i(k)}$ is continuous.
\end{assumption} 
To achieve the leader-follower consensus, we have following assumption for the MAS \eqref{eq3}.
\begin{assumption}\label{ass:2}
The signed digraph $\mathscr{G}=(\mathcal{V}, \mathscr{E}, \mathcal{A})$ is structurally balanced and contains a spanning tree with the leader as the root.
\end{assumption}

\begin{remark}
\label{rem:1}
Assumption \ref{ass:1} serves as a broad constraint for the nonlinear system.
To achieve any leader-follower consensus control, Assumption \ref{ass:2} is a necessary
and sufficient condition \cite{lewis2013cooperative}. Besides, unlike the strategies in \cite{guo2018asymmetric} and \cite{liang2021finite} that require a strongly connected communication directed graph, our assumption broadens its applicability by accommodating a weakly connected digraph. 
\end{remark}

\begin{assumption}[\cite{hou2010novel}]\label{ass:3}
The system \eqref{eq3} satisfies the generalized Lipschitz condition, i.e., for time instants $k+1, k \geqslant 0$ and {$u_i(k+1) \neq u_i(k)$}, there exists a positive constant $b_i^y$ such that $\left|\Delta y_i(k+1)\right|=b_i^y(k) \left|\Delta u_i(k)\right|$, where $\Delta y_i(k+1)=y_i(k+1)-y_i(k)$ and $\Delta u_i(k)=u_i(k)-u_i(k-1)$.
\end{assumption}

\begin{remark}
\label{rem:2}
Assumption \ref{ass:3} suggests that the system output increment is constrained by input increment, which is
satisfied in real-world systems.     
\end{remark}
\begin{lemma}[\cite{hou2010novel}]
\label{lem:1}
For nonlinear systems \eqref{eq3} satisfying the Assumptions \ref{ass:1} and \ref{ass:3}, there exists a time-varying parameter called pseudo partial derivative (PPD) $\mathscr{M}_{i}^y(k) \in \mathbb{R}$ such that the systems \eqref{eq3} can be equivalently transformed into the following compact-form dynamic linearization (CFDL) data model
\begin{equation}
\label{eq4}
\Delta y_i(k+1)=\mathscr{M}_i^y(k) \Delta u_i(k)    
\end{equation}
where $\left|\mathscr{M}_i^y(k)\right| \leqslant b_i^y$ is bounded with $\forall k$ and $i \in \mathcal{V}$.   
\end{lemma}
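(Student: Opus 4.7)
The plan is to combine the differential mean value theorem, enabled by Assumption \ref{ass:1}, with a scalar solvability argument, and then read off the uniform bound from Assumption \ref{ass:3}. The key insight is that $\mathscr{M}_i^y(k)$ is not derived uniquely from the dynamics; rather, it is constructed as a scalar that absorbs both the infinitesimal sensitivity of $f_i$ to $u_i$ and a residual term depending only on past data.

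First I would rewrite the output increment by inserting the auxiliary term $f_i(y_i(k), u_i(k-1))$, obtaining
\begin{equation*}
\Delta y_i(k+1) = \bigl[f_i(y_i(k),u_i(k)) - f_i(y_i(k),u_i(k-1))\bigr] + \varphi_i(k),
\end{equation*}
where $\varphi_i(k) := f_i(y_i(k),u_i(k-1)) - f_i(y_i(k-1),u_i(k-1))$ depends only on past outputs and on the previous input. Assumption \ref{ass:1} guarantees that $f_i$ is differentiable in its second argument with a continuous partial derivative, so the differential mean value theorem produces some $\tilde u_i$ lying between $u_i(k-1)$ and $u_i(k)$ such that the bracketed difference equals $\tfrac{\partial f_i}{\partial u_i}(y_i(k), \tilde u_i)\,\Delta u_i(k)$.

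Next I would introduce the scalar auxiliary equation $\eta_i(k)\,\Delta u_i(k) = \varphi_i(k)$ in the unknown $\eta_i(k)$. Because $\Delta u_i(k) \neq 0$ under the standing hypothesis, this equation is uniquely solvable; denote its solution by $\eta_i^*(k)$. Setting $\mathscr{M}_i^y(k) := \tfrac{\partial f_i}{\partial u_i}(y_i(k), \tilde u_i) + \eta_i^*(k)$ immediately yields the CFDL relation \eqref{eq4}. The uniform bound $|\mathscr{M}_i^y(k)| \leq b_i^y$ then follows by taking absolute values in \eqref{eq4}, dividing by $|\Delta u_i(k)|$, and invoking the generalized Lipschitz estimate of Assumption \ref{ass:3}.

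The main subtlety I expect is in the solvability step: the construction produces a valid scalar only when $\Delta u_i(k) \neq 0$, which is precisely the hypothesis under which the Lipschitz bound is stated. For time instants with $\Delta u_i(k) = 0$, any bounded scalar trivially satisfies the relation, so extending $\mathscr{M}_i^y(k)$ to all time steps is immediate. The conceptually more delicate point worth emphasizing is that the PPD is not an intrinsic feature of $f_i$ but rather a convenient artifact of the mean-value decomposition, and that it is Assumption \ref{ass:3}, not any smoothness of $f_i$, that supplies the uniform bound needed by the subsequent control design.
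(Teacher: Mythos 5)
Your proof is correct and follows essentially the same route as the cited source \cite{hou2010novel}: split $\Delta y_i(k+1)$ by inserting $f_i(y_i(k),u_i(k-1))$, apply the differential mean value theorem to the input difference (justified by Assumption \ref{ass:1}), absorb the residual $\varphi_i(k)$ via the solvable scalar equation $\eta_i(k)\Delta u_i(k)=\varphi_i(k)$, and obtain the bound from Assumption \ref{ass:3}. The only minor caveat is your closing remark about $\Delta u_i(k)=0$: there the relation forces $\Delta y_i(k+1)=0$, which need not hold, so the lemma genuinely requires the nonzero-increment hypothesis already built into Assumption \ref{ass:3} rather than admitting a trivial extension.
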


The unified ABC objective is proposed as follows.
\begin{definition}
\label{def:3}
Given a reference value $y_{d}(k)$ issued by the leader, the unified ABC objective is to achieve
\begin{equation}
\label{eq5}
\lim_{k \to \infty} y_i(k)=\left\{\begin{aligned}
&my_{d}(k),  &i \in \mathcal{V}_{1} \\
&-n y_{d}(k), &i \in \mathcal{V}_{2}
\end{aligned},\quad \mathcal{V}_{1} \cup \mathcal{V}_{2}=\mathcal{V}\right.
\end{equation}
where $m$ and $n$ are influence coefficients, which are positive.
\end{definition}

\begin{remark}
\label{rem:3}
Our model \eqref{eq5} introduces distinct asymmetric coefficients $m$ and $n$, unlike traditional single-coefficient ABC models \cite{guo2018asymmetric,liang2021finite}. It is effectively applied in robotic lower limb rehabilitation to apply tailored torques \cite{hussain2013adaptive}. When $m$ and $n$ equal $1$, it simplifies to standard bipartite consensus objective.
\end{remark}

To achieve the unified ABC objective \eqref{eq5}, we introduce the following neighbourhood ABC error
\begin{equation}
\label{eq6}
\begin{aligned}
&e_{y_i}(k)=
\\&\left\{\begin{aligned} 
&\left(\begin{aligned}&\sum_{j \in \mathcal{V}_{1}} \left(a_{ij}y_{j}(k)-\left|a_{ij}\right| y_i(k)\right)
\\&+\sum_{o \in \mathcal{V}_{2}} \left(a_{io}y_{o}(k)-m^{-1}n \left|a_{io}\right| y_i(k)\right)
\\&+\left(g_iy_{d}(k)-m^{-1}\left|g_i\right| y_i(k)\right)\end{aligned}\right),& i \in \mathcal{V}_{1}
\\&\left(\begin{aligned}&\sum_{j \in \mathcal{V}_{1}} \left(a_{ij}y_{j}(k)-n^{-1} m\left|a_{ij}\right| y_i(k)\right) 
\\&+\sum_{o \in \mathcal{V}_{2}} \left(a_{io}y_{o}(k)-\left|a_{io}\right| y_i(k)\right)
\\&+\left(g_iy_{d}(k)-n^{-1}\left|g_i\right| y_i(k)\right)\end{aligned}\right),&i \in \mathcal{V}_{2}
\end{aligned}\right.
\end{aligned}
\end{equation}
The pinning gain $g_i$ from the leader to follower $i$ is determined by the availability of $y_{d}(k)$ to agent $i$. It is defined as follows
\[
g_i = \begin{cases} 
1, & i \in \mathcal{V}_{1} \\
-1, & i \in \mathcal{V}_{2} \\
0, & \text{otherwise}
\end{cases}
\]

The following technical result is needed.

\begin{lemma}[\cite{zhang2024resilient}]
\label{lem:3}
Under Assumption \ref{ass:2}, consider MAS \eqref{eq3}, the unified ABC objective \eqref{eq5} is achieved if and only if $\lim\limits_{k\to\infty}\mathbf{e}_y(k) =\mathbf{0}$. 
\end{lemma}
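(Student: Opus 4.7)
The plan is to reduce the asymmetric objective \eqref{eq5} to a standard leader--follower consensus via a diagonal change of coordinates, and then to write the neighborhood error as a nonsingular linear image of the transformed tracking error. Using the matrices $W$ and $S$ of Definition~\ref{def:2}, I would set $\tilde{y}_i(k)=(\delta_i/s_i)\,y_i(k)$ for every $i\in\mathcal{V}$, so that the ABC target $y_i\to\delta_i s_i y_d$ becomes $\tilde{y}_i(k)\to y_d(k)$ uniformly in $i$.

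Next, I would substitute $y_i=\delta_i s_i\tilde{y}_i$ into both branches of \eqref{eq6} and exploit the structural-balance identity $|a_{ij}|=\delta_i\delta_j a_{ij}$. After expanding, the four cases (indexed by the groups containing $i$ and $j$) collapse into
\begin{equation*}
\delta_i e_{y_i}(k)=\sum_{j\neq i}\tilde{a}_{ij}\bigl(\tilde{y}_j(k)-\tilde{y}_i(k)\bigr)+|g_i|\bigl(y_d(k)-\tilde{y}_i(k)\bigr),
\end{equation*}
with non-negative effective weights $\tilde{a}_{ij}:=s_j|a_{ij}|$. The ratios $m^{-1}n$ and $n^{-1}m$ that appear in \eqref{eq6} are precisely what is required so that the leader term and the inter-group edges cancel without leaving any affine offset. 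Packaging this componentwise identity into vector form yields
\begin{equation*}
W\mathbf{e}_y(k)=-\bigl(\tilde{\mathcal{L}}+\tilde{G}\bigr)\bigl(\tilde{\mathbf{y}}(k)-y_d(k)\mathbf{1}_N\bigr),
\end{equation*}
where $\tilde{\mathcal{L}}$ is the graph Laplacian of the non-negatively weighted digraph with weights $\{\tilde{a}_{ij}\}$ and $\tilde{G}=\mathrm{diag}(|g_i|)$.

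Because the rescaling preserves the edge structure of $\mathscr{G}$, Assumption~\ref{ass:2} guarantees that the augmented digraph (followers together with the leader-to-follower pinning edges) still contains a spanning tree rooted at the leader. By the classical pinned-Laplacian argument \cite{lewis2013cooperative}, $\tilde{\mathcal{L}}+\tilde{G}$ is then a nonsingular M-matrix. Since $W$ is involutory, and therefore invertible, the displayed identity yields $\mathbf{e}_y(k)\to\mathbf{0}$ if and only if $\tilde{\mathbf{y}}(k)\to y_d(k)\mathbf{1}_N$, which is precisely \eqref{eq5}.

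The chief obstacle is the second step: the asymmetric normalizations $m^{-1}n$ and $n^{-1}m$ in \eqref{eq6} look ad hoc, and the clean reduction only emerges after all four $(\mathcal{V}_\iota,\mathcal{V}_{3-\iota})$ sign combinations are expanded in turn. Once that collapse is verified, the remainder of the proof is essentially multiplication by $W^{-1}=W$ together with the standard M-matrix nonsingularity result for pinned Laplacians.
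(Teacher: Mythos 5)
The paper does not actually prove Lemma~\ref{lem:3}: it is imported by citation from \cite{zhang2024resilient}, so there is no in-text argument to compare against. Your proof is correct and is the natural one. I verified the key collapse: with $\tilde{y}_i=(\delta_i/s_i)y_i$ and the structural-balance identity $|a_{ij}|=\delta_i\delta_j a_{ij}$, both branches of \eqref{eq6} reduce to $\delta_i e_{y_i}(k)=\sum_{j}s_j|a_{ij}|\bigl(\tilde{y}_j(k)-\tilde{y}_i(k)\bigr)+|g_i|\bigl(y_d(k)-\tilde{y}_i(k)\bigr)$, and the asymmetric ratios $m^{-1}n$, $n^{-1}m$ are exactly what make the offsets cancel. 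Your vectorized identity is also consistent with the matrix $\varPsi=L_{|\mathscr{G}|}WS+\mathcal{G}$ used in Theorem~\ref{thm:1}: a short computation gives $L_{|\mathscr{G}|}WS=W\tilde{\mathcal{L}}$ and $\mathcal{G}=W\tilde{G}$, so your $\tilde{\mathcal{L}}+\tilde{G}$ equals $W\varPsi$, which is reassuring. Two small points deserve explicit mention. First, the ``only if'' direction genuinely needs nonsingularity of $\tilde{\mathcal{L}}+\tilde{G}$; under the paper's definition of $g_i$ every follower has $|g_i|=1$, so the matrix is strictly diagonally dominant and nonsingularity is immediate, while under the weaker reading (some $g_i=0$) your spanning-tree/M-matrix argument via \cite{lewis2013cooperative} is the right tool and does require noting, as you do, that the positive rescaling $\tilde{a}_{ij}=s_j|a_{ij}|$ preserves the edge set. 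Second, since $\tilde{\mathcal{L}}+\tilde{G}$ is a fixed (time-independent) nonsingular matrix, the equivalence of the two limits follows from continuity of the map and of its inverse even when $y_d(k)$ is time-varying; it is worth one sentence to say so. With those remarks, the proof is complete.
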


Besides, the following assumtion and the induced lemma is necessary for the proof in the later sections.
\begin{assumption}
[\cite{ma2021distributed}]\label{ass:4}
For any $k$, $\frac{\left\|\Delta u_{j}(k)\right\|}{\left\|\Delta u_i(k)\right\|}<\sigma_{ij}$, $i, j \in \mathcal{V}$, where $\sigma_{ij}$ is a positive constant.
\end{assumption}
\begin{lemma}[\cite{ma2021distributed}]
\label{lem:4}
Given Assumptions \ref{ass:1}, \ref{ass:2}, \ref{ass:3}, and \ref{ass:4}, and assume condition $\left|u_i(k)\right|>\epsilon_i$ holds, where $\epsilon_i$ is a positive constant, then there exists a d pseudo-partial derivative (PPD) $\mathscr{M}_i(k)$ such
that \eqref{eq6} can be equivalently transformed into the following CFDL data model
\begin{equation}
\label{eq8}
\Delta e_{y_i}(k+1)=\mathscr{M}_i(k) \Delta u_i(k)
\end{equation}where $\Delta e_{y_i}(k+1)=e_{y_i}(k+1)-e_{y_i}(k)$ and $\mathscr{M}_i(k) \in \mathbb{R}$ is bounded, i.e., $\left|\mathscr{M}_i(k)\right| \leqslant b_i$, where $b_i$ is a 
positive constant.
\end{lemma}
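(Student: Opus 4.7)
The plan is to derive the CFDL data model for the error signal by differencing the definition in \eqref{eq6}, substituting the node-wise CFDL from Lemma \ref{lem:1}, and then using Assumption \ref{ass:4} to collapse the multi-neighbor contribution into a single scalar gain acting on $\Delta u_i(k)$.

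First, because the weights $a_{ij}$, $|a_{ij}|$, $m^{-1}n$, $m^{-1}|g_i|$ and the pinning gain $g_i$ in \eqref{eq6} are time-invariant, applying the one-step difference operator $\Delta$ term by term gives, for $i\in\mathcal{V}_1$,
\begin{equation*}
\begin{aligned}
\Delta e_{y_i}(k+1) &= \sum_{j\in\mathcal{V}_1} a_{ij}\Delta y_j(k+1) + \sum_{o\in\mathcal{V}_2} a_{io}\Delta y_o(k+1) + g_i\Delta y_d(k+1) \\
&\quad - \left(\sum_{j\in\mathcal{V}_1}|a_{ij}| + m^{-1}n\sum_{o\in\mathcal{V}_2}|a_{io}| + m^{-1}|g_i|\right)\Delta y_i(k+1),
\end{aligned}
\end{equation*}
with a symmetric identity (with $m$ and $n$ swapped) for $i\in\mathcal{V}_2$. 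Next I would invoke Lemma \ref{lem:1} to replace every follower increment by $\Delta y_j(k+1)=\mathscr{M}_j^y(k)\Delta u_j(k)$, and then invoke Assumption \ref{ass:4} (whose ratio is well-defined under the standing condition $|u_i(k)|>\epsilon_i$) to write each $\Delta u_j(k)=\eta_{ij}(k)\Delta u_i(k)$ with $|\eta_{ij}(k)|<\sigma_{ij}$. Factoring $\Delta u_i(k)$ out of the resulting expression isolates a scalar coefficient that I take as the definition of $\mathscr{M}_i(k)$, so that \eqref{eq8} holds by construction.

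The required bound is then immediate: each $|\mathscr{M}_j^y(k)|\leq b_j^y$ by Lemma \ref{lem:1}, each $|\eta_{ij}(k)|<\sigma_{ij}$ by Assumption \ref{ass:4}, the adjacency and pinning weights are bounded constants, and $|\mathcal{N}_i|<\infty$, so $\mathscr{M}_i(k)$ is a finite linear combination of bounded quantities and some $b_i>0$ satisfies $|\mathscr{M}_i(k)|\leq b_i$. The step I expect to require the most care is the leader-increment term $g_i\Delta y_d(k+1)$: it is not manifestly proportional to $\Delta u_i(k)$, so to bring it inside the single multiplier I need the condition $|u_i(k)|>\epsilon_i$ to absorb it as a bounded perturbation on the PPD, precisely the device that Lemma \ref{lem:1} uses to fold the nonlinearity of $f_i$ into a single bounded parameter. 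In the common setting where the leader broadcasts a constant reference, $\Delta y_d(k+1)\equiv 0$ and this subtlety disappears; otherwise it is lumped into $\mathscr{M}_i(k)$ at the cost of a possibly larger $b_i$.
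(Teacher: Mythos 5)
The paper does not actually prove Lemma~\ref{lem:4}: it is imported verbatim by citation from \cite{ma2021distributed}, so there is no in-paper argument to compare against. Taken on its own terms, your outline follows the standard CFDL derivation and the first three steps are sound: differencing \eqref{eq6} term by term is valid because the weights $a_{ij}$, $|a_{ij}|$, $m^{-1}n$, $n^{-1}m$, $m^{-1}|g_i|$ and $g_i$ are constants; Lemma~\ref{lem:1} legitimately replaces each $\Delta y_j(k+1)$ by $\mathscr{M}_j^y(k)\Delta u_j(k)$; and in this scalar setting Assumption~\ref{ass:4} does let you write $\Delta u_j(k)=\eta_{ij}(k)\Delta u_i(k)$ with $|\eta_{ij}(k)|<\sigma_{ij}$, after which the boundedness of the collected coefficient follows from finiteness of $\mathcal{N}_i$ and the bounds $b_j^y$, $\sigma_{ij}$.

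The genuine gap is exactly the step you flag but do not resolve: the leader term $g_i\,\Delta y_d(k+1)$. To fold it into $\mathscr{M}_i(k)\Delta u_i(k)$ you must divide by $\Delta u_i(k)$, and boundedness of the quotient $\Delta y_d(k+1)/\Delta u_i(k)$ requires two things you never establish: that $|\Delta y_d(k+1)|$ is bounded (plausible but nowhere assumed in the paper) and, more importantly, that $|\Delta u_i(k)|$ is bounded away from zero. The stated hypothesis $|u_i(k)|>\epsilon_i$ is a lower bound on the input \emph{magnitude}, not on the input \emph{increment}; it is entirely consistent with $\Delta u_i(k)\to 0$ while $\Delta y_d(k+1)\neq 0$, in which case the candidate $\mathscr{M}_i(k)$ is unbounded and the claimed $|\mathscr{M}_i(k)|\leqslant b_i$ fails. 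Your appeal to ``precisely the device that Lemma~\ref{lem:1} uses'' does not transfer: in Lemma~\ref{lem:1} the quantity being divided, $\Delta y_i(k+1)$, itself vanishes proportionally with $\Delta u_i(k)$ by the generalized Lipschitz condition of Assumption~\ref{ass:3}, whereas $y_d(k)$ is exogenous and has no functional dependence on $u_i$. A correct proof needs either a uniform lower bound on $|\Delta u_i(k)|$ (which is how the cited source \cite{ma2021distributed} phrases the standing condition) together with boundedness of the reference increment, or the restriction to a constant reference where the term vanishes identically.
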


We model the aperiodic DoS attack we aim to address as follows. The $\ell$th DoS attack interval is expressed as

\begin{equation}
\label{eq9}
K_{\ell}=\left[k_{\ell}, k_{\ell}+\theta_{\ell}\right)
\end{equation}
where $K_{\ell}$ determines when the transmission is interrupted, $k_{\ell}$ is the moment that the attacker strikes, $\theta_{\ell}$ represents the duration of the $\ell$th DoS attack, and $k_{\ell+1} > k_{\ell} + \theta_{\ell}$.

Furthermore, DoS attacks aim to block the transmission of data packets, consequently reducing system performance and leading to data packet dropouts. Since $e_{y_i}(k)$ contains all the exchange information $y_i$ and $y_d$, the data packets received by the controller during DoS attacks are represented as
\begin{equation}
\label{eq10}
\bar{e}_{y_i}(k)=h_i(k) e_{y_i}(k)
\end{equation}
where $
h_i(k)=\left\{\begin{aligned}
&1,&& k \in\left[k_{\ell-1}+\theta_{\ell-1}, k_{\ell}\right) 
\\&0,&& k \in\left[ k_{\ell}, k_{\ell}+\theta_{\ell}\right)
\end{aligned}\right.
$.

\begin{assumption}
[DoS attack frequency\cite{de2015input}]
\label{ass:5}
For any $\left[k_{0}, k\right] \subset$ $[0, \infty)$, there must be positive constants $\kappa_{a}$ and $K_{\infty}$ function $f_{a}\left(k_{0}, k\right)$ such that

\begin{equation}
\label{eq12}
n_{a}\left(k_{0}, k\right) \leq \kappa_{a}+f_{a}\left(k_{0}, k\right),
\end{equation}
where $n_{a}\left(k_{0}, k\right)$ is the amount of DoS attack occurring over $\left[k_{0}, k\right]$.    
\end{assumption}
\begin{assumption}[DoS attack duration\cite{de2015input}]
\label{ass:6}
For any $\left[k_{0}, k\right] \subset$ $[0, \infty)$, there must be positive constants $\zeta_{a}$ and $K_{\infty}$ function $f_{\Xi}\left(k_{0}, k\right)$ such that

\begin{equation}
\label{eq13}
\Xi_{a}\left(k_{0}, k\right) \leq \zeta_{a}+f_{\Xi}\left(k_{0}, k\right) .
\end{equation}
\end{assumption}
\begin{remark}
\label{remark4}
Assumptions \ref{ass:5} and \ref{ass:6} avoid making assumptions about the probability distribution of the attacker's strategy which is different from \cite{zhang2023resilient,zhang2024resilient}. The frequency of DoS attacks is limited by an upper bound, as indicated by Assumption \ref{ass:5}. Similarly, Assumption \ref{ass:6} limits the total duration of DoS attacks. References for Assumptions \ref{ass:5} and \ref{ass:6} are provided in \cite{de2015input,yang2020distributed}.
\end{remark}

Now we introduce the following resilient ABC problem.
\begin{problem}[Resilient ABC Problem]
\label{prb:1}
Given Assumptions \ref{ass:1}, \ref{ass:2}, and \ref{ass:3}, and considering the MAS \eqref{eq3} under aperiodic DoS attacks, the objective of the ABC problem is to develop an DDET algorithm. This algorithm aims to maintain the ABC error $e_i(k)=y_d(k)-s_i^{-1}\delta_iy_i(k)$ within a specified boundary. Specifically, there exists a positive constant $b_i$ such that $\left|e_i(k)\right| \leqslant b_i$, where the sequence of event-triggered instants $k_{t}^{i}$ is determined by the following criterion
\begin{equation}
\label{eq14}    k_{t}^{i}=\inf \left\{k \geq k_{t-1}^{i} \bigg| g_i\left(\Delta_{i}(k), \tilde{e}_{y_i}(k)\right)<0, t \in \mathbb{N}\right\}
\end{equation}
where $g_i$ is a event-triggered function for agent $i$ and $k_{0}^{i}=1$.
\end{problem}

\section{DDET Algorithm Design}

In this section, the DDET algorithm is designed. For each agent $i$, to counteract aperiodic DoS attacks, reduce continuous communication, and save network resources, we introduce the event-triggered neighbourhood ABC error $\tilde{e}_{y_i}(k)$ as follows
\begin{flalign*}
\label{eq15}
\begin{aligned}
&\tilde{e}_{y_i}\left(k\right)=
\\&\left\{\begin{aligned}
&\left(\begin{aligned}&\sum_{j \in \mathcal{V}_{1}} \left(a_{ij}y_{j}\left(k_{t}^{j}\right)-\left|a_{ij}\right| y_i\left(k_{t}^{i}\right)\right)
\\&+\sum_{o \in \mathcal{V}_{2}} \left(a_{io}y_{o}\left(k_{t}^{o}\right)-m^{-1}n \left|a_{io}\right| y_i\left(k_{t}^{i}\right)\right)
\\&+\left(g_iy_{d}(k)-m^{-1}\left|g_i\right| y_i\left(k_{t}^{i}\right)\right)\end{aligned}\right),&i \in \mathcal{V}_{1}
\\&\left(\begin{aligned}&\sum_{j \in \mathcal{V}_{1}} \left(a_{ij}y_{j}\left(k_{t}^{j}\right)-n^{-1} m\left|a_{ij}\right| y_i\left(k_{t}^{i}\right)\right) 
\\&+\sum_{o \in \mathcal{V}_{2}} \left(a_{io}y_{o}\left(k_{t}^{o}\right)-\left|a_{io}\right| y_i\left(k_{t}^{i}\right)\right)
\\&+\left(g_iy_{d}(k)-n^{-1}\left|g_i\right| y_i\left(k_{t}^{i}\right)\right)\end{aligned}\right),& i \in \mathcal{V}_{2}
\end{aligned}\right.
\end{aligned}
&&\raisetag{5\baselineskip}
\end{flalign*}
where $\tilde{e}_{y_i}(k)$ represents the consensus error based on the triggered outputs and $k_{t}^{i}$ is the event-triggered time instant.

The distributed event-triggered condition is designed as follows
\begin{equation}
\label{eq16}
g\left(\tilde{e}_{y_i}(k), \Delta_{i}(k)\right)=\left|\tilde{e}_{y_i}(k-1)\right|-\theta(k-1)\left|\Delta_{i}(k-1)\right|<0
\end{equation}
which is used to determine the sequence of event-triggered instants $k_{t}^{i}$ for each agent $i$. The term $\Delta_{i}(k) = s_{i}^{-1} \delta_{i}(y_{i}(k) - y_{i}(k_{t}^{i}))$ represents the event-triggered output error, while $\theta(k)$ is time-varing parameter which will be determined later.

The DDET protocol is designed as
\begin{align}
&\begin{aligned}
\label{eq17}
&\hat{\mathscr{M}}_i(k)=
\\&\left\{\begin{aligned}&\begin{aligned}
&\hat{\mathscr{M}}_i(k_{t-1}^{i})
-\frac{\eta_{1i}\hat{\mathscr{M}}_i(k_{t-1}^{i})\Delta u_i^2(k_{t-1}^{i}) }{\Delta u_i^2(k_{t-1}^{i})+\mu_i}
\\&+\frac{\eta_{1i}\Delta u_i(k_{t-1}^{i})}{\Delta u_i^2(k_{t-1}^{i})+\mu_i}
\left(\bar{e}_{y_i}(k_{t}^{i})-\bar{e}_{y_i}(k_{t}^{i}-1)\right)
\end{aligned},&&k = k_{t}^{i}
\\
&\hat{\mathscr{M}}_i(k_{t-1}^{i}), && k \in \left(k_{t-1}^{i}, k_{t}^{i}\right)\end{aligned}\right.
\end{aligned}&&\raisetag{4\baselineskip}
\\\label{eq18}&\begin{aligned}
&\hat{\mathscr{M}}_{i}(k)=\hat{\mathscr{M}}_{i}(1),\text { if }\left|\hat{\mathscr{M}}_{i}(k)\right|<\gamma \text { or }\left|\Delta u_{i}(k)\right|<\gamma  
\\&\text{ or } \operatorname{sgn}\left(\hat{\mathscr{M}}_{i}(k)\right) \neq \operatorname{sgn}\left(\hat{\mathscr{M}}_{i}(1)\right)
\end{aligned}
\\\label{eq19}&\begin{aligned}
&u_i(k) =
\\&\left\{\begin{aligned}&u_i(k_{t-1}^{i}) +\frac{\eta_{2i}\hat{\mathscr{M}}_i(k_{t-1}^{i})}{\hat{\mathscr{M}}_i(k_{t-1}^{i})^2+\varpi_i}h_i(k_{t}^{i})\tilde{e}_{y_i}\left(k\right),&&k = k_{t}^{i}
\\&u_i(k_{t-1}^{i}),&&k \in \left(k_{t-1}^{i}, k_{t}^{i}\right)
\end{aligned}\right.
\end{aligned}&&\raisetag{2.5\baselineskip}
\end{align}
Note that when a DoS attack occurs at the event-triggered time, that is, when $h_i(k_{t}^{i})=0$, the control input will not be updated during the DoS attack. Furthermore, if the agent's event is valid at $k=k_{t}^{i}$, the output $\tilde{e}_{y_i}\left(k_{t}^{i}\right)=\tilde{e}_{y_i}(k)$ becomes the triggered output of the update controller and remains unchanged until the next event-triggered instant $k_{t+1}^{i}$. Meanwhile, $\tilde{e}_{y_i}\left(k_{t}^{i}\right)$ is released into the network, which can reduce communication between agents.
\begin{theorem}\label{thm:1}
Consider a MAS \eqref{eq4}, with  Assumptions \ref{ass:1}, \ref{ass:2}, \ref{ass:3}, and \ref{ass:4} holds, under the event-triggered condition \eqref{eq16}, there exists a time-varying parameter $\theta(k)$ such that 
\begin{equation}
\label{eq20}
\theta(k)=\frac{2\sigma_{\max}\left(\varPsi^{\mathrm{T}}P(k)\right)}{2\sigma_{\min}\left(P(k)\right)+\sigma_\max \left(P(k)^2\right)}
\end{equation}
where $P(k) = - \mathrm{diag}\left(\frac{\eta_{2i}\mathscr{M}_i(k)\hat{\mathscr{M}}_i(k)}{\hat{\mathscr{M}}_i(k)^2+\varpi_i}\right)$, and $\varPsi=L_{|\mathscr{G}|}WS+\mathcal{G}$. The local ABC error $e_i(k)$ for the system under aperiodic DoS attacks is bounded, that is the ABC problem is solved.
\end{theorem}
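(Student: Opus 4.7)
The plan is to conduct a Lyapunov analysis on the stacked neighborhood ABC error $\mathbf{e}_y(k) = [e_{y_1}(k), \ldots, e_{y_N}(k)]^{\mathrm{T}}$, treating separately the healthy intervals (when $h_i(k)=1$) and the attack intervals (when $h_i(k)=0$), and then merging the two regimes via Assumptions \ref{ass:5} and \ref{ass:6}. First, I would aggregate Lemma \ref{lem:4} across all agents and substitute the controller update \eqref{eq19} under the healthy regime. Definition \ref{def:2} together with \eqref{eq6} implies that $\varPsi = L_{|\mathscr{G}|}WS + \mathcal{G}$ is precisely the matrix that maps scaled output perturbations back into neighborhood errors, so the closed-loop recursion takes the form
\begin{equation*}
\mathbf{e}_y(k+1) = \mathbf{e}_y(k) - \varPsi P(k)\,\tilde{\mathbf{e}}_y(k) + \rho(k),
\end{equation*}
where $P(k)$ is the diagonal gain introduced in the theorem statement and $\rho(k)$ collects the residual caused by the mismatch $\tilde{\mathbf{e}}_y(k) - \mathbf{e}_y(k)$, a quantity directly controlled by the per-agent event-triggered output errors $\Delta_i(k)$.

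Next, I would choose the quadratic candidate $V(k) = \|\mathbf{e}_y(k)\|^2$ and expand $\Delta V(k) = V(k+1) - V(k)$. Splitting $\tilde{\mathbf{e}}_y = \mathbf{e}_y + (\tilde{\mathbf{e}}_y - \mathbf{e}_y)$ and applying singular-value bounds of the form $\|\varPsi P(k)\| \leq \sigma_{\max}(\varPsi^{\mathrm{T}}P(k))$ together with $\mathbf{e}_y^{\mathrm{T}}\varPsi P(k)\mathbf{e}_y \geq \sigma_{\min}(P(k))\|\mathbf{e}_y\|^2$ (positive definiteness of $P(k)$ being enforced by the sign-matching clause of \eqref{eq18}), and substituting the event-triggered condition \eqref{eq16}, which pins $|\tilde{e}_{y_i}(k)| \leq \theta(k)|\Delta_i(k)|$, one arrives at a quadratic inequality of the form $\Delta V(k) \leq -\alpha(k)\|\mathbf{e}_y(k)\|^2 + \beta(k)$. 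The particular $\theta(k)$ in \eqref{eq20} is calibrated exactly so that $\alpha(k) > 0$ for every $k$, forcing $V$ to contract strictly on every healthy-interval step.

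During DoS intervals, $u_i(k)$ is held fixed, so $\|\Delta \mathbf{e}_y(k+1)\|$ is bounded by $\max_i b_i \cdot \|\Delta \mathbf{u}(k)\|$ via Lemma \ref{lem:4}, yielding at most linear growth of $V$. Combining the healthy-interval contraction with the attack-interval growth over any window $[k_0, k]$ and invoking Assumptions \ref{ass:5}--\ref{ass:6} to cap the count and total length of attack bursts produces a uniform upper bound on $V(k)$. To close the argument, I would exploit the identity $y_i - s_i \delta_i y_d = -s_i \delta_i e_i$ together with the fact that $\mathbf{e}_y$ is obtained by applying a nonsingular linear map built from $\varPsi$, $W$, and $S$ to the vector of local errors $[e_1,\ldots,e_N]^{\mathrm{T}}$; boundedness then transfers directly from $\mathbf{e}_y$ to each $e_i(k)$, settling Problem \ref{prb:1}.

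I expect the hardest part to be verifying that the specific formula \eqref{eq20} is precisely the largest $\theta(k)$ keeping $\alpha(k)$ nonnegative: this requires delicate singular-value manipulation of the cross term $\mathbf{e}_y^{\mathrm{T}}\varPsi P(k)(\tilde{\mathbf{e}}_y - \mathbf{e}_y)$ and careful bookkeeping of the $P(k)^2$ and $\varPsi^{\mathrm{T}}P(k)$ contributions in the denominator and numerator of \eqref{eq20}. A secondary subtlety is that different agents trigger at distinct instants, so the stacked vector-level inequality must dominate every per-agent triggering configuration simultaneously, which is where the $\sigma_{\max}/\sigma_{\min}$ decomposition becomes essential rather than a convenience.
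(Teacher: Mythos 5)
Your overall strategy---a Lyapunov analysis of $V(k)=\|\mathbf{e}_y(k)\|^2$ on the stacked neighborhood error, singular-value bounds, and calibration of $\theta(k)$ so that $\Delta V\leqslant 0$---is the same as the paper's, and your explicit treatment of the attack intervals via Assumptions \ref{ass:5}--\ref{ass:6} is actually more complete than the (page-limited) published argument, which stops at the healthy-step decrement. However, there are two concrete structural errors that would prevent you from arriving at \eqref{eq20}. First, you place $\varPsi$ in the wrong spot. Aggregating Lemma \ref{lem:4} with the controller \eqref{eq19}, as you yourself propose, gives $\mathbf{e}_y(k+1)=\mathbf{e}_y(k)-P(k)\tilde{\mathbf{e}}_y(k)$ with \emph{no} $\varPsi$: the PPD $\mathscr{M}_i(k)$ in the theorem's $P(k)$ is already the pseudo-partial derivative of the neighborhood error, not of the raw output, so the graph structure is absorbed into $\mathscr{M}_i(k)$. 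In the paper, $\varPsi$ enters only through the mismatch identity $\mathbf{e}_y(k)=\tilde{\mathbf{e}}_y(k)-\varPsi\Delta(k)$ (the triggering errors $\Delta_i(k)=s_i^{-1}\delta_i(y_i(k)-y_i(k_t^i))$ propagate to the neighborhood errors through $L_{|\mathscr{G}|}WS+\mathcal{G}$), which is what produces the cross term $2\Delta(k)^{\mathrm{T}}\varPsi^{\mathrm{T}}P(k)\tilde{\mathbf{e}}_y(k)$ and hence the numerator $\sigma_{\max}(\varPsi^{\mathrm{T}}P(k))$ of \eqref{eq20}. With your recursion $\mathbf{e}_y(k+1)=\mathbf{e}_y(k)-\varPsi P(k)\tilde{\mathbf{e}}_y(k)+\rho(k)$, the negative quadratic term becomes $-\mathbf{e}_y^{\mathrm{T}}(\varPsi P(k)+P(k)^{\mathrm{T}}\varPsi^{\mathrm{T}})\mathbf{e}_y$, and the bound $\mathbf{e}_y^{\mathrm{T}}\varPsi P(k)\mathbf{e}_y\geqslant\sigma_{\min}(P(k))\|\mathbf{e}_y\|^2$ you invoke is false in general: it requires positive definiteness of the symmetric part of $\varPsi P(k)$, a graph-dependent condition that the sign-matching clause of \eqref{eq18} does not supply and that the theorem does not assume. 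The paper's decomposition needs only the diagonal matrix $P(k)$ to be sign-definite, which is why its denominator reads $2\sigma_{\min}(P(k))+\sigma_{\max}(P(k)^2)$ with no $\varPsi$.

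Second, you read the triggering condition backwards. The event fires when $g<0$ in \eqref{eq16}, i.e.\ when $|\tilde{e}_{y_i}|<\theta|\Delta_i|$; consequently the inequality that is \emph{maintained between events}, and that must be substituted into the Lyapunov decrement, is $\theta(k)|\Delta_i(k)|\leqslant|\tilde{e}_{y_i}(k)|$, i.e.\ an upper bound on $\|\Delta(k)\|$ in terms of $\|\tilde{\mathbf{e}}_y(k)\|$. Your stated inequality $|\tilde{e}_{y_i}(k)|\leqslant\theta(k)|\Delta_i(k)|$ runs the wrong way: it bounds the dominant negative term by the perturbation rather than the perturbation by the dominant negative term, so it cannot close the estimate $\Delta V\leqslant\left(-2\sigma_{\min}(P)+\sigma_{\max}(P^2)\right)\|\tilde{\mathbf{e}}_y\|^2+2\sigma_{\max}(\varPsi^{\mathrm{T}}P)\|\Delta\|\,\|\tilde{\mathbf{e}}_y\|\leqslant 0$. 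Fixing the placement of $\varPsi$ and the direction of the trigger inequality would bring your argument into line with the paper's; the remaining elements (the DoS-interval growth bound via Lemma \ref{lem:1}, the merging through Assumptions \ref{ass:5}--\ref{ass:6}, and the transfer of boundedness from $\mathbf{e}_y$ to $e_i$ through the nonsingular graph map guaranteed by Assumption \ref{ass:2}) are sound and go beyond what the paper prints.
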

\begin{proof}
    Due to the page limit, the detailed proof is omitted. The simplified proof is given. Firstly, $\tilde{e}_{y_i}(k)$ is expressed in a compact form
\begin{equation}
\label{eq31}
\begin{aligned}
\tilde{e}_{y_i}(k)=&\sum_{j \in \mathcal{N}_i} \left(a_{ij}y_{j}(k_{t}^{j})-\delta_is_i^{-1}a_{ij}s_j\delta_j y_i(k_{t}^{i})
\right)
\\&+g_i\left(y_{d}(k)-\delta_is_i^{-1} y_i(k_{t}^{i})\right), i \in \mathcal{V}
\end{aligned}
\end{equation}

Define a Lyapunov function candidate $V(k)=\mathbf{e}_y(k)^{\mathrm{T}}\mathbf{e}_y(k)$, where $\mathbf{e}_y(k)=\mathrm{col}[e_{y_i}]$. Then we have
\begin{flalign}
\label{eq38}
\begin{aligned}
\Delta V(&k+1)=\mathbf{e}_y(k+1)^{\mathrm{T}}\mathbf{e}_y(k+1)-\mathbf{e}_y(k)^{\mathrm{T}}\mathbf{e}_y(k)
\\=&\left(\mathbf{e}_y(k)-P(k)\tilde{\mathbf{e}}_y(k) \right)^{\mathrm{T}}\left(\mathbf{e}_y(k)-P(k)\tilde{\mathbf{e}}_y(k) \right)
\\&-\mathbf{e}_y(k)^{\mathrm{T}}\mathbf{e}_y(k)
\\=&-2\mathbf{e}_y(k)^{\mathrm{T}}P(k)\tilde{\mathbf{e}}_y(k)
+\tilde{\mathbf{e}}_y(k)^{\mathrm{T}}P(k)^2\tilde{\mathbf{e}}_y(k)
\\=&-2\tilde{\mathbf{e}}_y(k)^{\mathrm{T}}P(k)\tilde{\mathbf{e}}_y(k)+2\Delta(k)^{\mathrm{T}}\varPsi^{\mathrm{T}} P(k)\tilde{\mathbf{e}}_y(k)
\\\leqslant&-2\sigma_{\min}\left(P(k)\right)\left\|\tilde{\mathbf{e}}_y(k)\right\|^2+\sigma_\max \left(P(k)^2\right)\left\|\tilde{\mathbf{e}}_y(k)\right\|^2
\\&+2\sigma_{\max}\left(\varPsi^{\mathrm{T}}P(k)\right)\left\|\Delta(k)\right\|\left\|\tilde{\mathbf{e}}_y(k)\right\|
\end{aligned}&&\raisetag{4\baselineskip}
\end{flalign}

Denote $\theta(k) = \frac{2\sigma_{\max}\left(\varPsi^{\mathrm{T}}P(k)\right)}{2\sigma_{\min}\left(P(k)\right)+\sigma_\max \left(P(k)^2\right)}$. There is a sufficient condition for $\Delta V\leqslant 0$ is
$\theta (k)\left|\Delta_i(k)\right|-\left|\tilde{e}_{y_i}(k)\right|\leqslant 0$. Since when $\Delta V\leqslant 0$, there no need any update action. Hence, we get a sufficient event-triggered condition \eqref{eq16} with time-varying parameter \eqref{eq20}. 
\end{proof}

\section{Numerical Example}
\label{sec:num}
In this section, a structurally balanced communication graph for $8$ agents is considered, as shown in Fig.\ref{Fig.1}. We denote $\mathcal{V}_{1}=\{1,2,4,5,6,8\}$ and $\mathcal{V}_{2}=\{3,7\}$. 
Then, two numerical examples are introduced to demonstrate the proposed DDET's effectiveness and resilience. First, we examine its performance for nonlinear MAS under aperiodic DoS attacks with constant consensus reference leader. Then, we explore its application to MAS under aperiodic DoS with time-varying consensus reference leader.

Agent $1$ is the only root in the graph, which has a directed path to the remaining seven agents. However, directed paths among these $7$ agents do not all exist. Hence, the given digraph is not strongly connected compared to \cite{guo2018asymmetric} and \cite{liang2021finite} but has a spanning tree. The influence coefficients are $m=3$ and $n=4$. To run the proposed DDET protocol, the initial values are set as follows: $y_{i}(1)=0.5$, $u_{i}(1)=0, \hat{\mathscr{M}}_{i}(1)=1$, $\theta(0) =0$, and $P(0) =0$. The other necessary parameters are chosen as $\eta_{1i}=\eta_{2i}=0.1$, $ \mu_i=1$, and $ \varpi_i=-0.1$.

\begin{figure}[ht]
\label{Fig.1}
\centering
\includegraphics[width=0.35\textwidth]{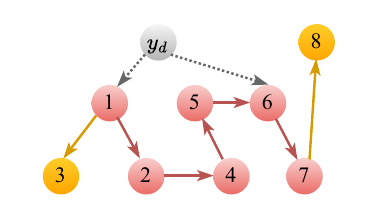}
\caption{The structurally balanced, signed, directed communication graph for networked agents.}
\end{figure}
\subsection{DDET under Aperiodic DoS with Constant Leader}

In this numerical example, the models of $8$ agents are given to produce the nonlinear data flow as follows 
\begin{equation*}
\begin{aligned}
&y_{i}(k+1) =
\\&\left\{\begin{aligned}
&\frac{y_{i}(k) u_{i}(k)}{1 + y_{i}(k)^{w_{i1}}} 
+ w_{i2} u_{i}(k)  \tanh(u_{i}(k)),i \in \{1,2\} &&   
\\& \frac{y_{i}(k) u_{i}(k)}{1 + y_{i}(k)^{w_{i1}}}+ w_{i2} u_{i}(k)  \sin(u_{i}(k)), i \in \{3,4\}&&
    \\& \frac{y_{i}(k) u_{i}(k)}{1 + y_{i}(k)^{w_{i1}} + u_{i}(k)} 
    + w_{i2} u_{i}(k)  \cos(u_{i}(k)), i \in \{5,6\}
    \\& \frac{y_{i}(k) u_{i}(k)}{1 + y_{i}(k)^{w_{i1}} + u_{i}(k)} + w_{i2} u_{i}(k)  \log(1 + |u_{i}(k)|), i \in \{7,8\}
\end{aligned}\right.
\\&\begin{aligned}
&w_{11} = 4,  w_{12} = 2, 
w_{21} = 3,  w_{22} = 3,
\\&w_{31} = 4,  w_{32} = 4,
w_{41} = 3,  w_{42} = 5, 
\\&w_{51} = 2,  w_{52} = 2,
w_{61} = 4,  w_{62} = 0.4, 
\\&w_{71} = 2,  w_{72} = 2,
w_{81} = 3,  w_{82} = 0.5.    
\end{aligned}
\end{aligned}
\end{equation*}

The desired reference consensus signal is
\begin{equation*}
y_d(k)=\left\{\begin{aligned}
&3, &&k \in [0,900)
\\&2, &&k \in [900, 1700)
\\&1, &&k \in [1700, 2500)
\end{aligned}\right.   
\end{equation*}

\begin{figure}[ht]
\centering
\includegraphics[width=0.5\textwidth]{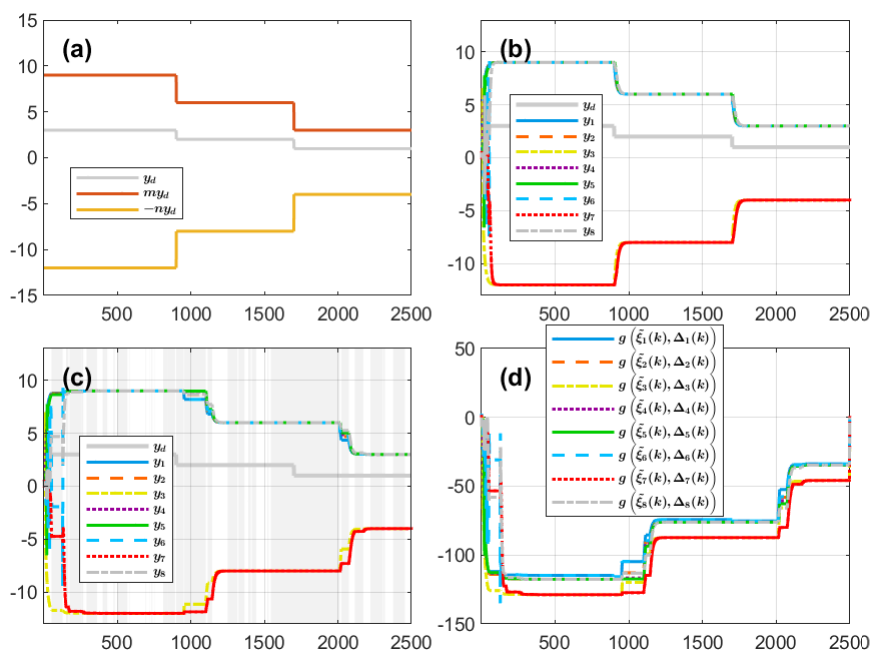}
\caption{(a) The grey line is reference leader $y_d$, the red line is the desired trajectory of $\mathcal{V}_1$, and the yellow line is the desired trajectory of $\mathcal{V}_2$. (b) Output trajectories under the DDET protocols without aperiodic DoS attacks: The gray solid line represents the constant leader reference signal $y_d(k)$. (c) Output trajectories $y_i(k)$ under Aperiodic DoS attacks while using the DDET protocols: The gray solid line represents the constant leader reference signal $y_d(k)$. (d) The event-triggered function value $g_i\left(\Delta_{i}(k), \tilde{e}_{y_i}(k)\right)$ based on the developed data-driven parameter while under aperiodic DoS attacks.}
\label{Figure_1}
\end{figure}

The outputs of the agents are profiled in Fig. \ref{Figure_1}. It is clear that over the time interval \(k \in [0,900)\), the outputs \(y_{i}\) for \(i \in \mathcal{V}_{1}\) track within a tiny boundary of 9, while \(y_{i}\) for \(i \in \mathcal{V}_{2}\) track within a small boundary of -12, with respect to \(y_{d}(k)=3\). For the interval \(k \in [900, 1700)\), \(y_{i}\) for \(i \in \mathcal{V}_{1}\) track within a tiny boundary of 6, and \(y_{i}\) for \(i \in \mathcal{V}_{2}\) track within a small boundary of -8, with respect to \(y_{d}(k)=2\). Additionally, over the time interval \(1700, 2500)\), it is observed that \(y_{i}\) for \(i \in \mathcal{V}_{1}\) maintain tracking within a tiny boundary of 3, and \(y_{i}\) for \(i \in \mathcal{V}_{2}\) within a small boundary of -4, with respect to \(y_{d}(k)=1\). The consistent tracking within specified boundaries, with respect to \(y_{d}(k)\), demonstrates that the asymmetric bipartite objective, without an aperiodic DoS attack is achieved.

\subsection{DDET under Aperiodic DoS with Time-Varing Leader}

In this example, the nonlinear MAS are considered, with the models of 8 agents given to produce the nonlinear data flow as follows
\begin{equation*}
\begin{aligned}
&y_{1}(k+1) = \frac{y_{1}(k) u_{1}(k)}{1 + y_{1}(k)^{2}} + 2 u_{1}(k) \\
&y_{2}(k+1) = \frac{y_{2}(k) u_{2}(k)}{1 + y_{2}(k)^{3}} + 5 u_{2}(k) \\
&y_{3}(k+1) = \frac{y_{3}(k) u_{3}(k)}{1 + y_{3}(k)^{2}} + 3 u_{3}(k) \\
&y_{4}(k+1) = \frac{y_{4}(k) u_{4}(k)}{1 + y_{4}(k)^{3}} + 5 u_{4}(k) \\
&y_{5}(k+1) = \frac{y_{5}(k) u_{5}(k)}{1 + y_{5}(k)^{1}} + 0.8 u_{5}(k) \\
&y_{6}(k+1) = \frac{y_{6}(k) u_{6}(k)}{1 + y_{6}(k)^{0.9}} + 0.5 u_{6}(k) \\
&y_{7}(k+1) = \frac{y_{7}(k) u_{7}(k)}{1 + y_{7}(k)^{1.2}} + 0.4 u_{7}(k) \\
&y_{8}(k+1) = \frac{y_{8}(k) u_{8}(k)}{1 + y_{8}(k)^{2}} + 0.5 u_{8}(k)
\end{aligned}
\end{equation*}


The desired time-varing $y_d(k)$ for the MAS is set as
\begin{equation*}
\begin{aligned}
&y_d(k)=
\\&\left\{\begin{aligned}
&5\sin \left(\frac{\pi k}{2500}\right)+3\cos \left(\frac{\pi k}{2500}\right), &&k \in [0,1199]
\\&5\sin \left(\frac{\pi k}{5000}\right)+6\cos \left(\frac{\pi k}{6000}\right), &&k \in [1200, 2500]
\end{aligned}\right.
\end{aligned}    
\end{equation*}

\begin{figure}[ht]
\centering
\includegraphics[width=0.5\textwidth]{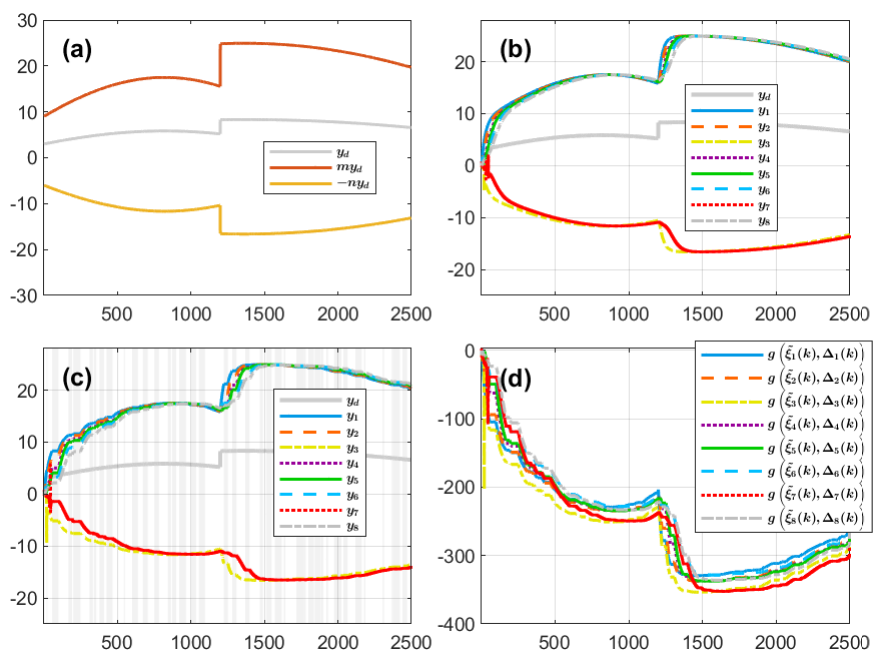}
\caption{(a) The grey line is reference leader $y_d$, the red line is the desired trajectory of $\mathcal{V}_1$, and the yellow line is the desired trajectory of $\mathcal{V}_2$. (b)  Output trajectories under the DDET protocols without aperiodic DoS attacks: The gray solid line represents the constant leader reference signal $y_d(k)$. (c) Output trajectories under the DDET protocols under aperiodic DoS attacks: The gray solid line represents the constant leader reference signal $y_d(k)$. (d) The event-triggered function value $g_i\left(\Delta_{i}(k), \tilde{e}_{y_i}(k)\right)$ based on the developed data-driven parameter while under aperiodic DoS attacks.}
\label{Figure_2}
\end{figure}

From Figs. \ref{Figure_1} and \ref{Figure_2}, we observe that followers maintain the desired trajectory within a tight boundary as dictated by the proposed DDET, illustrating its success in resolving the ABC problem for nonlinear MAS under aperiodic DoS attacks. These findings validate the proposed DDET's ability to facilitate ABC tracking under an event-triggered communication scheme, highlighting its resilience against aperiodic DoS attacks. The protocol's effectiveness, relying solely on agents' triggered outputs and inputs, showcases the data-driven essence of the control method. Therefore, the DDET not only exemplifies leader-follower unified ABC control but also confirms the boundedness of all signals in challenging communication environments.

\section{CONCLUSION}
\label{sec:con}
The ABC problem of networked nonlinear discrete-time MAS with both cooperative and antagonistic interactions under aperiodic DoS attacks has been investigated. The distributed DDET protocols, in the forms of state feedback and output feedback, are designed based on the predicted values of neighboring agents' states and their triggered outputs, respectively. Two distinct numerical examples with constant and time-varying reference values have demonstrated the effectiveness and resilience of the proposed DDET protocols.

\ifCLASSOPTIONcaptionsoff
  \newpage
\fi
\ifCLASSOPTIONcaptionsoff
  \newpage
\fi
\bibliographystyle{IEEEtran}
\bibliography{ref}

\end{document}